\newtheorem{lemma}{Lemma}
\begin{document}

\title{\LARGE Energy Harvesting Enabled MIMO Relaying through Time Switching}

\author{Jialing Liao, Muhammad R. A. Khandaker and Kai-Kit Wong
\thanks{Y. Zhang is with the Kuang-Chi Institute of Advanced Technology and Z. Zheng is with the East Institute of CATR of MIIT, China, while the other authors are with the Department of Electronic and Electrical Engineering, University College London, United Kingdom.}\thanks{This work is supported by EPSRC under grant EP/K015893/1.}}

\pagestyle{headings}
\maketitle \thispagestyle{empty}

\begin{abstract}
This paper considers simultaneous wireless information and power transfer (SWIPT) for a multiple-input multiple-output (MIMO) relay system. The relay is powered by harvesting energy from the source via time switching (TS) and utilizes the harvested energy to forward the information signal. Our aim is to maximize the rate of the system subject to the power constraints at both the source and relay nodes. In the first scenario in which the source covariance matrix is an identity matrix, we present the joint-optimal solution for relaying and the TS ratio in closed form. An iterative scheme is then proposed for jointly optimizing the source and relaying matrices and the TS ratio.
\end{abstract}


\section{Introduction}
Combining multiple-input multiple-output (MIMO) antenna and relaying is a promising means to enhance both coverage and performance of wireless communications networks \cite{ruhul11, ruhul12, ruhul13, ruhul14, ruhul15, ruhul16, ruhul_yue}. In \cite{swipt_1}, the optimal structure of the relay matrix to maximize the rate assuming unitary source precoding was presented. Joint source and relay optimization was considered in \cite{swipt_2, swipt_3}. Recently, more complex applications such as robust beamforming design with imperfect channel state information (CSI) were studied \cite{swipt_4}. On the other hand, energy harvesting \cite{ruhul_kk, ruhuul2} has emerged as an attractive component for relaying and cooperative communication as the battery storage of the relay nodes is usually limited.

Research has been carried out to study simultaneous wireless information and power transfer (SWIPT) for multi-antenna relay networks \cite{swipt_5, swipt_6, swipt_7}. For complexity reasons, preference was given to the time-switching (TS) mechanism over power splitting for energy harvesting, e.g., \cite{swipt_6, swipt_7}. Unfortunately, the existing approaches fail to provide a joint TS coefficient and precoding matrices design for a generic MIMO relay system. Also, \cite{swipt_5, swipt_6} depend on either semi-definite relaxation (SDR) and existing solvers or iterative approaches while in \cite{swipt_7} a closed form solution was given for a MISO relay system rather than general MIMO network. Closed-form solutions as well as the structures of the source covariance and relay beamforming matrices are not well understood.

Most recently, power-splitting based energy harvesting is considered for MIMO relay networks in \cite{jialing_ps}. In order to derive the maximum capacity with power constraints at the source and relay nodes, the authors jointly optimize the source covariance matrix, relay beamforming matrix and PS ratio. Instead of using SDR and software solvers, the structures of the optimal source covariance and relay precoding matrix are derived based on which iterative approaches are employed to derive the near-optimal results.

Unlike \cite{jialing_ps}, this paper considers the rate maximization problem for the MIMO relay system with an energy harvesting relay node employing TS. The fixed source covariance matrix case and the joint source, relay, and TS ratio optimization case are both investigated. Unlike the existing attempts which rely on SDR, we give the structures of the optimal relay beamforming matrix and the source covariance matrix and propose a closed-form solution and an iterative solution for the two cases, respectively.

\section{System Model}
We consider a two-hop MIMO relay network with an energy harvesting relay node and assume that the source, relay and destination nodes are all equipped with multiple antennas. The numbers of antennas are $M, L, N$, respectively. The relay harvests energy from the source and uses it to forward the information. Here we assume that the direct link between the source and destination is negligible with perfect channel state information (CSI) known at all nodes.

\begin{figure}
\centering
\includegraphics*[width=8.5cm]{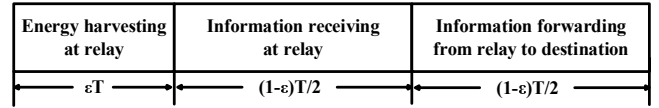}
\caption{The framework of the proposed TS relaying.}\label{TSR}
\end{figure}

The TS-based relaying involves three phases, as shown in Fig.~\ref{TSR}, with $T$ being the block length, and $\varepsilon$ denoting the TS ratio. In the first phase, the channel and source covariance matrices are defined as ${\bf \widetilde{H}}_1$ and ${\bf \widetilde{Q}}$, respectively. ${\bf \widetilde{s}}$ is the source symbol vector. In the information transmit phases, we will use ${\bf s}$ to denote the source symbol vector, $P$ to denote maximum transmit power, ${\bf H}_1$ and ${\bf H}_2$ to represent the channel matrices between the source and relay, and the relay and the destination, respectively, ${\bf Q}$ to denote the source covariance matrix and ${\bf F}$ the relay beamforming matrix. We will also consider the additive white Gaussian noises (AWGNs), ${\bf n}_1$ and ${\bf n}_2$, at the relay and destination nodes with variances $\sigma_1^2$ and $\sigma_2^2$, respectively.

\section{Relay and TS Ratio Only Design}
Here, we fix the source covariance matrix in the information transmit phase as ${\bf Q}=$$\mathbb{E}{({\bf s}{\bf s}^H)}$$=\frac {P}{D}{\bf I}$, where $D~(D \le \min(M, L, N))$ is the number of data streams. The superscript $H$ is the Hermitian operator and ${\bf I}$ is an identity matrix.

The harvested power at the relay can be expressed as
\begin{equation}
{\rm tr}{({\bf y}_e {\bf y}_e^H)} ={\rm tr}{({\bf \widetilde{H}}_1{\bf \widetilde{Q}}{\bf \widetilde{H}}_1^H+{\sigma_1^2}{\bf I}_D)},
\end{equation}
where $ {\bf y}_e = {\bf \widetilde{H}}_1 {\bf \widetilde{s}}+{\bf n}_1$ is the received signal at the relay in the energy harvesting phase.

Then in the information transmission phase, the received signal at the relay can be written as
\begin{equation}
{\bf y}_r = {\bf H}_1 {\bf s}+{\bf n}_1.
\end{equation}
In the last time phase, the received signal at the destination is
\begin{equation}
{\bf y}_d = {\bf H}_2 {\bf F} {\bf H}_1 {\bf s}+{\bf H}_2 {\bf F} {\bf n}_1+{\bf n}_2.
\end{equation}
As a result, the achievable rate is given by \cite{swipt_1}
\begin{equation}\label{cap}
C=\frac {1-\varepsilon}{2}\log_2 {\rm det}\left({\bf I}_D+\rho_1 {\bf H}_1{\bf H}_1^H -\rho_1 {\bf H}_1{\bf H}_1^H{\bf S}^{-1}\right),
\end{equation}
where ${\bf S}={\bf I}_D+\frac {\sigma_1^2}{\sigma_2^2} {\bf F}^H{\bf H}_2^H{\bf H}_2 {\bf F}$, and $T=1$ is assumed. The signal-to-noise ratio (SNR) at the relay is defined as $\rho_1\triangleq\frac {P}{D \sigma_1^2}$,

The transmitted signal at the relay
\begin{equation}
{\bf x}_r ={\bf F} {\bf H}_1 {\bf s}+{\bf F} {\bf n}_1
\end{equation}
will have to satisfy the harvested power constraint
\begin{equation}\label{tpc}
\frac {1-\varepsilon}{2}\sigma_1^2{\rm tr}{({\bf F} ({\bf I}_D+\rho_1 {\bf H}_1{\bf H}_1^H){\bf F}^H)} \le \varepsilon \eta {\rm tr}{({\bf \widetilde{H}}_1{\bf \widetilde{Q}}{\bf \widetilde{H}}_1^H+{\sigma_1^2}{\bf I}_D)}
\end{equation}
where $0 \le \eta \le 1 $ is the energy conversion efficiency.

Note that in the energy harvesting phase, the objective is to maximize the harvested power subject to the transmit power constraint, solution to which is given in \cite{swipt_8} as described below.
\begin{lemma}
Let the singular value decomposition (SVD) of ${\bf \widetilde{H}}_1$ be ${\bf \widetilde{H}}_1={\bf U}\Gamma^{\frac {1}{2}} {\bf V}^H$ where ${\bf U}$ and ${\bf V}$ are unified while the diagonal elements of $\Gamma$, $g_1, g_2,...,g_D$, are arranged in an descending order. $v_1$ is defined as the first column of ${\bf V}$. $P_0$ denotes the transmit power threshold in the energy harvesting phase. The optimal solution of the optimization problem to
\begin{align}
\max_{\bf \widetilde{Q}}\quad {\rm tr} ({\bf \widetilde{H}}_1 {\bf \widetilde{Q}}{\bf \widetilde{H}}_1^H+\sigma_1^2{\bf I}_D) &\\ \mbox{s.t.}\quad \quad{\rm tr} ({\bf \widetilde{Q}})\leq P_0, {\bf \widetilde{Q}}\succeq 0.
\end{align}
is ${\bf \widetilde{Q}}=P_0v_1v_1^H$ and the corresponding maximum harvested power is given by $g_1P_0+\sigma_1^2D$.
\end{lemma}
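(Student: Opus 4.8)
The plan is to prove the lemma by reducing the trace-maximization objective to a standard beamforming problem and then invoking the Rayleigh-quotient characterization of the largest eigenvalue. First I would observe that $\mathrm{tr}({\bf \widetilde{H}}_1 {\bf \widetilde{Q}}{\bf \widetilde{H}}_1^H + \sigma_1^2{\bf I}_D) = \mathrm{tr}({\bf \widetilde{H}}_1 {\bf \widetilde{Q}}{\bf \widetilde{H}}_1^H) + \sigma_1^2 D$, so the additive noise term is a constant and the problem collapses to maximizing $\mathrm{tr}({\bf \widetilde{H}}_1 {\bf \widetilde{Q}}{\bf \widetilde{H}}_1^H) = \mathrm{tr}({\bf \widetilde{Q}}\,{\bf \widetilde{H}}_1^H{\bf \widetilde{H}}_1)$ over ${\bf \widetilde{Q}}\succeq 0$ with $\mathrm{tr}({\bf \widetilde{Q}})\le P_0$. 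Substituting the SVD ${\bf \widetilde{H}}_1 = {\bf U}\Gamma^{1/2}{\bf V}^H$ gives ${\bf \widetilde{H}}_1^H{\bf \widetilde{H}}_1 = {\bf V}\Gamma{\bf V}^H$, whose ordered eigenvalues are $g_1\ge g_2\ge\cdots\ge g_D$ with the leading eigenvector $v_1$.

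Next I would establish the upper bound. Write ${\bf \widetilde{Q}} = \sum_i \lambda_i u_i u_i^H$ in its own eigendecomposition with $\lambda_i\ge 0$ and $\sum_i \lambda_i \le P_0$. Then $\mathrm{tr}({\bf \widetilde{Q}}\,{\bf V}\Gamma{\bf V}^H) = \sum_i \lambda_i\, u_i^H {\bf V}\Gamma{\bf V}^H u_i \le \sum_i \lambda_i\, g_1 = g_1 \sum_i \lambda_i \le g_1 P_0$, where the first inequality is exactly the Rayleigh-quotient bound $x^H({\bf V}\Gamma{\bf V}^H)x \le g_1\|x\|^2$ applied to each unit-norm $u_i$. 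Hence $\mathrm{tr}({\bf \widetilde{H}}_1 {\bf \widetilde{Q}}{\bf \widetilde{H}}_1^H + \sigma_1^2{\bf I}_D) \le g_1 P_0 + \sigma_1^2 D$ for every feasible ${\bf \widetilde{Q}}$.

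Finally I would verify that ${\bf \widetilde{Q}} = P_0 v_1 v_1^H$ attains this bound: it is clearly positive semidefinite, $\mathrm{tr}(P_0 v_1 v_1^H) = P_0\|v_1\|^2 = P_0$ so it is feasible, and $\mathrm{tr}(P_0 v_1 v_1^H {\bf V}\Gamma{\bf V}^H) = P_0\, v_1^H{\bf V}\Gamma{\bf V}^H v_1 = P_0 g_1$ since $v_1$ is the eigenvector of ${\bf V}\Gamma{\bf V}^H$ with eigenvalue $g_1$. Adding back $\sigma_1^2 D$ gives the claimed optimal value $g_1 P_0 + \sigma_1^2 D$. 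I do not expect any real obstacle here; the only point needing mild care is being explicit that the trace objective is linear in ${\bf \widetilde{Q}}$ so that the optimum is attained at an extreme point of the feasible set (a rank-one matrix), which is precisely what the eigendecomposition argument above makes rigorous without appealing to extreme-point theory directly.
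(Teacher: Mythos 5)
Your proof is correct. Note that the paper itself gives no inline argument for this lemma --- it simply cites Proposition 2.1 of Zhang and Ho's SWIPT paper --- and the standard proof of that cited result is exactly the eigenvalue/trace-inequality argument you give: drop the constant $\sigma_1^2 D$, use cyclicity of the trace to reduce to $\mathrm{tr}({\bf \widetilde{Q}}\,{\bf V}\Gamma{\bf V}^H)$, bound via the Rayleigh quotient applied to the eigenvectors of ${\bf \widetilde{Q}}$, and check that the rank-one matrix $P_0 v_1 v_1^H$ attains the bound. So your write-up is not a different route so much as a self-contained expansion of the proof the paper outsources; the only added value over the citation is that your version makes explicit why the linear objective is maximized by a rank-one covariance without invoking extreme-point theory. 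One trivial caveat: your bound $u_i^H{\bf V}\Gamma{\bf V}^H u_i \le g_1$ implicitly uses that ${\bf U}$ has orthonormal columns (the paper's ``unified'' presumably means unitary/semi-unitary), which is worth stating since ${\bf \widetilde{H}}_1^H{\bf \widetilde{H}}_1 = {\bf V}\Gamma{\bf V}^H$ depends on it.
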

\begin{proof}
See \cite[Proposition 2.1]{swipt_8}.
\end{proof}
Let us now define ${\bf G}=\frac {\sigma_1}{\sigma_2}{\bf F}$ and formulate the following optimization problem
\begin{multline}\label{maxRob1}
\max_{{\bf G}, \varepsilon} ~~ C~~\mbox{s.t.}~~
\frac {1-\varepsilon}{2}{\rm tr}{({\bf G} ({\bf I}_D+\rho_1 {\bf H}_1{\bf H}_1^H){\bf G}^H)}\\
\le \frac {\varepsilon \eta}{\sigma_2^2} (g_1P_0+\sigma_1^2D).
\end{multline}
Note that for fixed ${\varepsilon}$, problem \eqref{maxRob1} becomes technically identical to the one considered in \cite{swipt_1}. Moreover, it can be proved that the presence of ${\varepsilon}$ does not change the optimal structure of the relay matrix. Hence, we have the relay matrix given by
\begin{equation}
{\bf F}={\bf V}_2 {\bf \Lambda }_F {\bf U}_1^H,
\end{equation}
where ${\bf \Lambda }_F$ denotes a diagonal matrix. $ {\bf V}_2 $ and $ {\bf U}_1$ come from the SVDs of the matrices:
\begin{align}
{\bf H}_1&={\bf U}_1 {\bf \Sigma}_1 {\bf V}_1^H,\label{channel} \\
{\bf H}_2&={\bf U}_2 {\bf \Sigma}_2 {\bf V}_2^H. \label{channel1}
\end{align}

Now, let ${\bf G}={\bf V}_2 {\bf X}^\frac{1}{2} ({\bf I}+\rho_1 {\bf \Lambda}_1)^{-\frac {1}{2}} {\bf U}_1^H$, where ${\bf X}$ is a diagonal matrix with ${\bf X}={\rm diag}({x_1,x_2,\dots,x_D})$. In addition, we let ${\bf \Lambda_1}={\bf \Sigma}_1{\bf \Sigma}_1^H$ and ${\bf \Lambda_2}={\bf \Sigma}_2^H {\bf \Sigma}_2$ with the diagonal vectors $\boldsymbol{\alpha}=[\alpha_1,\dots,\alpha_D]$ and $\boldsymbol{\beta}=[\beta_1,\dots,\beta_D]$, respectively. The optimization problem (\ref {maxRob1}) can then be rewritten as
\begin{subequations}\label{maxRob2}
\begin{align}
&\max_{\{x_k\ge 0\}, 0\le \varepsilon\le 1} f(\{x_k\},\varepsilon)~~{\rm s.t.}\label{R2_o}\\
&g(\{x_k\},\varepsilon) \triangleq \frac {\varepsilon \eta}{\sigma_2^2} (g_1P_0+\sigma_1^2D)
-\frac {1-\varepsilon}{2}\sum_{k=1}^{D} x_k \ge 0,\label{R2_1}
\end{align}
\end{subequations}
in which
$f(\{x_k\},\varepsilon) \triangleq\frac {1-\varepsilon}{2}
[\sum_{k=1}^{D}\log_2 (1+\rho_1\alpha_k)+\sum_{k=1}^{D}$ $\log_2(\frac {1+\beta_k x_k}{1+\rho_1\alpha_k+\beta_k x_k})]$.
Considering the Lagrangian, we have
\begin{equation}\label{maxRob3}
\min_{\{x_k\ge 0\},0\le\varepsilon\le 1\atop \nu\ge 0,\{\lambda_k\ge 0\}} {\cal L} \triangleq f(\{x_k\},\varepsilon)+\nu g(\{x_k\},\varepsilon)+\sum_{k=1}^{D} \lambda_k x_k.
\end{equation}
Based on the Karush-Kuhn-Tucker (KKT) conditions, we have
\begin{subequations}\label{minRob4}
\begin{align}
\nu g(\{x_k\},\varepsilon) &=0, \label{R4_o}\\
\lambda_k x_k&=0, \forall k,\label{R4_1}\\
{\nabla_{x_k}}{\cal L}&=0, \forall k,\label{R4_2}\\
\nabla_\varepsilon{\cal L}&=0.\label{R4_3}
\end{align}
\end{subequations}
Since $\lambda_k \ge 0$, $k=1,...,D$, using (\ref {R4_2}) 
we can show that
\begin{equation}
\nu \ge \frac{1}{\ln 2}  \frac {\left(\frac{\rho_1\alpha_k}{\beta_k}\right)}{\left(x_k+\frac {1}{\beta_k}\right)\left(x_k+\frac {\rho_1\alpha_k+1}{\beta_k}\right)}, \forall k.
\end{equation}
Let $h_k(x_k)=\frac{1}{\ln 2} \frac {\frac{\rho_1\alpha_k}{\beta_k}}{(x_k+\frac {1}{\beta_k})(x_k+\frac {\rho_1\alpha_k+1}{\beta_k})}$. It is obvious that $h_k$ decreases when $x_k$ rises from $0$ to $\infty$, i.e., $h_k(0)=\frac{1}{\ln 2} \frac {\rho_1\alpha_k\beta_k}{1+\rho_1\alpha_k}$, $h_k(\infty) \to 0$. Then according to (\ref {R4_1}), we know that
\begin{align}
x_k\left[\nu-\frac{1}{\ln 2}  \frac {\frac{\rho_1\alpha_k}{\beta_k}}{(x_k+\frac {1}{\beta_k})(x_k+\frac {\rho_1\alpha_k+1}{\beta_k})}\right]=0, \forall k.
\end{align}

If $\nu \ge \frac{1}{\ln 2} \frac {\rho_1\alpha_k\beta_k}{1+\rho_1\alpha_k}$, we know that $x_k=0$. Otherwise, if $0 < \nu < \frac{1}{\ln 2}  \frac {\rho_1\alpha_k\beta_k}{1+\rho_1\alpha_k}$, then we will have $x_k>0$ and $\nu = \frac{1}{\ln 2} \frac {\frac{\rho_1\alpha_k}{\beta_k}}{(x_k+\frac {1}{\beta_k})(x_k+\frac {\rho_1\alpha_k+1}{\beta_k})}$. Then according to \cite{swipt_1}, the optimal $x_k$ can be written as
\begin{align}\label{nu}
x_k=\frac{1}{2\beta_k}\left(\sqrt {\rho_1^2\alpha_k^2+\frac{4}{\ln 2} \rho_1\alpha_k\beta_k \mu}-\rho_1\alpha_k-2\right)^+,
\end{align}
where $(a)^+=\max\{0,a\}$ and $\mu=\frac{1}{\nu}$ can be obtained by substituting (\ref {nu}) into (\ref {R4_3}) such that
\begin{multline}
l(\mu)\triangleq-\frac{1}{2}\sum_{k=1}^{D}\log_2 \left[(1+\rho_1\alpha_k)\left(\frac {1+\beta_k x_k}{1+\rho_1\alpha_k+\beta_k x_k}\right)\right]\\
+\frac{1}{\mu}\left[\frac {\eta}{\sigma_2^2} (g_1P_0+\sigma_1^2D)+\frac {1}{2} \sum_{k=1}^{D} x_k\right]=0.
\end{multline}
Obviously, $l(\mu)$ decreases when $\mu \geq  \max_k\ln 2 \frac {1+\rho_1\alpha_k}{\rho_1\alpha_k\beta_k} $. Moreover, when $\mu \in [\min_k\ln 2 \frac {1+\rho_1\alpha_k}{\rho_1\alpha_k\beta_k},\max_k\ln 2 \frac {1+\rho_1\alpha_k}{\rho_1\alpha_k\beta_k}]$, $x_k$ either maintain at 0 or increases with $\mu$ which makes $l(\mu)$ decreases within this interval. To be exact, we have
\begin{equation}
l(\infty) \to -\frac {1}{2}\sum_{k=1}^{D} \log_2 (1+\rho_1\alpha_k) <0,
\end{equation}
and
\begin{multline}
l\left(\min_k\ln 2 \frac {1+\rho_1\alpha_k}{\rho_1\alpha_k\beta_k}\right)\\
=\left(\max_k\frac{1}{\ln 2} \frac {\rho_1\alpha_k\beta_k}{1+\rho_1\alpha_k}\right) \frac {\eta}{\sigma_2^2} (g_1P_0+\sigma_1^2D) >0.
\end{multline}
In contrast, when $ \mu \in (0, \min_k\ln 2 \frac {1+\rho_1\alpha_k}{\rho_1\alpha_k\beta_k}]$, $x_k=0, \forall k$ and
\begin{equation}
l(u) =\frac{1}{\mu} \frac {\eta}{\sigma_2^2} (g_1P_0+\sigma_1^2D)>0.
\end{equation}
As a consequence, we can always find an optimal value $\mu^* \in (\min_k\ln 2 \frac {1+\rho_1\alpha_k}{\rho_1\alpha_k\beta_k}, \infty)$ which makes $l(\mu^*)=0$ and this can be done by root-finding approaches such as bisection. When $\mu^*$ and $x_k, \forall k,$ are known, the optimal TS ratio can easily be calculated according to (\ref {R4_o}), which is given by
\begin{equation}
\varepsilon^* =\frac {\sum_{k=1}^{D} {x_k^*}}{\frac {2\eta}{\sigma_2^2} (g_1P_0+\sigma_1^2D)+\sum_{k=1}^{D} {x_k^*}}.
\end{equation}

\section{Joint Source, Relay and TS Ratio Design}
In this section, we consider a more general but challenging scenario with any available ${\bf Q}$. Based on (\ref {cap}), the instantaneous achievable rate in this case is given by
\begin{equation}
C 
=\frac {1-\varepsilon}{2}\log_2 {\rm det}\left({\bf I}_D+\frac{{\bf H}_2 {\bf F}{\bf H}_1 {\bf Q} {\bf H}_1^H{\bf F}^H{\bf H}_2^H}{\sigma_2^2{\bf I}_D+\sigma_1^2{\bf H}_2 {\bf F}{\bf F}^H{\bf H}_2^H}\right).
\end{equation}
Then the optimization problem of interest becomes
\begin{multline}\label{maxRobC}
\max_{{\bf F},\varepsilon \atop {\rm tr}{({\bf Q})} \le P}C~~{\rm s.t.}~~
\frac {1-\varepsilon}{2}{\rm tr}{(\sigma_1^2{\bf F}{\bf F}^H +{\bf F}{\bf H}_1 {\bf Q} {\bf H}_1^H{\bf F}^H)}\\
\le \varepsilon \eta {\rm tr}{({\bf \widetilde{H}}_1{\bf \widetilde{Q}}{\bf \widetilde{H}}_1^H+{\sigma_1^2}{\bf I}_D)}.
\end{multline}
By introducing an equivalent channel $\widehat{{\bf H}}_1={\bf H}_1 {\bf Q}^{\frac {1}{2}}$, the optimization problem becomes similar to the previous fixed source covariance matrix case. Therefore, we have $\widehat{{\bf F}}={\bf V}_2 \widehat{{\bf \Sigma}}_F \widehat{{\bf U}}_1^H$ where $\widehat{{\bf \Sigma}}_F$ is diagonal, and $\widehat{{\bf U}}_1$ and ${\bf V}_2$ come from the SVDs of $\widehat{{\bf H}}_1 = \widehat{{\bf U}}_1 \widehat{{\bf \Sigma}}_1 \widehat{{\bf V}}_1^H$, and ${\bf H}_2$ given in \eqref{channel1}.

As can be observed, both the objective function and the energy harvesting constraint have nothing to do with $\widehat{{\bf U}}_1$which indicates that any available $\widehat{{\bf H}}_1$ with the same $\widehat{{\bf \Sigma}}_1$ acts equally in terms of capacity and the energy harvesting constraint. That is to say, the optimal ${\bf Q}$ must require the least transmit power. Considering the fact that the presence of the TS ratio $\varepsilon$ will not change the structures of the source covariance and relay processing matrices, we provide the optimal structures of the source covariance and relay processing matrices below.

\begin{lemma}
The optimal solution of the optimization problem (\ref{maxRobC}) has the following structures
\begin{align}
{\bf F}&={\bf V}_2 {\bf \Sigma }_F {\bf U}_1^H,\label{structure1} \\
{\bf Q}&={\bf V}_1 {\bf \Lambda }_Q {\bf V}_1^H,\label{structure2}
\end{align}
where ${\bf \Sigma}_F, {\bf \Lambda}_Q$ are diagonal matrices, and the unitary matrices ${\bf U}_1, {\bf V}_1,{\bf U}_2, {\bf V}_2$ have been defined in (\ref{channel}) and (\ref{channel1}).
\end{lemma}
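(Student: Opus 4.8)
\emph{Proof idea.} The plan is to build on the reduction already carried out above: substituting the equivalent channel $\widehat{{\bf H}}_1={\bf H}_1{\bf Q}^{\frac12}$ turns \eqref{maxRobC} into the form of the fixed-source problem, so for every fixed $\varepsilon$ and every fixed ${\bf Q}$ the optimal relay matrix is $\widehat{{\bf F}}={\bf V}_2\widehat{{\bf \Sigma}}_F\widehat{{\bf U}}_1^H$ with $\widehat{{\bf U}}_1$ the left singular-vector matrix of $\widehat{{\bf H}}_1$. First I would insert this structure back into the rate and into the energy-harvesting (EH) constraint and verify, using ${\bf H}_2={\bf U}_2{\bf \Sigma}_2{\bf V}_2^H$ from \eqref{channel1} and $\widehat{{\bf H}}_1=\widehat{{\bf U}}_1\widehat{{\bf \Sigma}}_1\widehat{{\bf V}}_1^H$, that $\widehat{{\bf U}}_1$ and $\widehat{{\bf V}}_1$ cancel and that everything reduces to expressions in the diagonal matrices $\widehat{{\bf \Sigma}}_1$, $\widehat{{\bf \Sigma}}_F$, ${\bf \Sigma}_2$ only. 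Hence, once the singular values collected in $\widehat{{\bf \Sigma}}_1$ (equivalently, the eigenvalues of ${\bf H}_1{\bf Q}{\bf H}_1^H$) are fixed, the rate and the EH constraint are frozen, and the only remaining constraint that depends on ${\bf Q}$ is ${\rm tr}({\bf Q})\le P$.

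Second, at any optimum I would replace the optimal ${\bf Q}^\star$ by the positive semidefinite matrix of smallest trace that realizes the same singular-value set $\widehat{{\bf \Sigma}}_1$: this leaves the objective and the EH constraint untouched and can only add slack to ${\rm tr}({\bf Q})\le P$, so it remains optimal. It therefore suffices to solve, for a prescribed target spectrum $\widehat{\alpha}_1\ge\cdots\ge\widehat{\alpha}_D$ of ${\bf H}_1{\bf Q}{\bf H}_1^H$, the problem of minimizing ${\rm tr}({\bf Q})$ over ${\bf Q}\succeq 0$ with that spectrum. Writing ${\bf R}={\bf V}_1^H{\bf Q}{\bf V}_1\succeq 0$, so that ${\rm tr}({\bf R})={\rm tr}({\bf Q})$, and restricting to the $D$ signalling modes on which ${\bf \Lambda}_1={\rm diag}(\alpha_1,\dots,\alpha_D)\succ 0$ (the remaining ${\bf Q}$-components only waste power and are set to zero), ${\bf H}_1{\bf Q}{\bf H}_1^H$ is unitarily equivalent to ${\bf M}={\bf \Lambda}_1^{\frac12}{\bf R}{\bf \Lambda}_1^{\frac12}$; thus the eigenvalues of ${\bf M}$ are the fixed numbers $\widehat{\alpha}_k$, while ${\rm tr}({\bf R})={\rm tr}({\bf \Lambda}_1^{-1}{\bf M})=\sum_{k=1}^D \alpha_k^{-1}M_{kk}$. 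Since the diagonal of ${\bf M}$ is majorized by its eigenvalue vector (Schur--Horn) and $(\alpha_1^{-1},\dots,\alpha_D^{-1})$ is nondecreasing, the rearrangement inequality gives $\sum_k \alpha_k^{-1}M_{kk}\ge \sum_k \alpha_k^{-1}\widehat{\alpha}_k$, and equality forces ${\bf M}$ (hence ${\bf R}$, hence ${\bf Q}$) to be diagonal in the ${\bf V}_1$ basis, i.e. ${\bf Q}={\bf V}_1{\bf \Lambda}_Q{\bf V}_1^H$ with ${\bf \Lambda}_Q={\rm diag}(\widehat{\alpha}_1/\alpha_1,\dots,\widehat{\alpha}_D/\alpha_D)$.

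Third, I would close the loop. For this ${\bf Q}$ we have $\widehat{{\bf H}}_1={\bf H}_1{\bf Q}^{\frac12}={\bf U}_1{\bf \Sigma}_1{\bf \Lambda}_Q^{\frac12}{\bf V}_1^H$, so $\widehat{{\bf U}}_1={\bf U}_1$; substituting $\widehat{{\bf U}}_1={\bf U}_1$ into $\widehat{{\bf F}}={\bf V}_2\widehat{{\bf \Sigma}}_F\widehat{{\bf U}}_1^H$ gives ${\bf F}={\bf V}_2{\bf \Sigma}_F{\bf U}_1^H$, which together with ${\bf Q}={\bf V}_1{\bf \Lambda}_Q{\bf V}_1^H$ is precisely \eqref{structure1}--\eqref{structure2}. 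Carrying $\varepsilon$ along as a fixed parameter throughout is harmless, since for each $\varepsilon$ the $({\bf F},{\bf Q})$-problem has the above form and the scalar $\tfrac{1-\varepsilon}{2}$ does not affect any of the structural steps; the conclusion therefore also holds at the jointly optimal $\varepsilon$.

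I expect the second step to be the crux. The subtle points there are: justifying rigorously that plugging the optimal relay structure makes the rate and the EH constraint depend on ${\bf Q}$ only through $\widehat{{\bf \Sigma}}_1$ (so that ``freezing $\widehat{{\bf \Sigma}}_1$ and then re-optimizing ${\bf Q}$'' is legitimate); handling rank-deficient or non-square ${\bf \Sigma}_1$ cleanly by restricting to the $D$ active modes; and getting the ordering right in the rearrangement argument so that the equality case really pins down the ${\bf V}_1$-aligned diagonal ${\bf Q}$ (when the $\alpha_k$ are not all distinct there may be other optimal ${\bf Q}$, but at least one has the claimed structure). The explicit monotonicity/capacity computations and the closed forms of ${\bf \Lambda}_Q$ and ${\bf \Sigma}_F$ are routine and are not needed for the structural claim.
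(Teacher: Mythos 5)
Your proposal is correct and follows essentially the same route the paper takes: the paper itself only sketches the argument (equivalent channel $\widehat{{\bf H}}_1={\bf H}_1{\bf Q}^{1/2}$, invariance of the rate and the harvesting constraint to $\widehat{{\bf U}}_1$, hence ``the optimal ${\bf Q}$ must require the least transmit power'') and then defers the formal proof to \cite[Theorem 1]{swipt_2}. Your Schur--Horn/rearrangement step is a clean and valid way to make the ``least transmit power for a prescribed spectrum of ${\bf H}_1{\bf Q}{\bf H}_1^H$'' claim rigorous (including the equality case forcing diagonality in the ${\bf V}_1$ basis, via the standard fact that a Hermitian matrix whose diagonal equals its eigenvalue vector must be diagonal), and your caveats about repeated singular values and rank-deficient ${\bf \Sigma}_1$ are exactly the right ones.
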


\begin{proof}
See \cite[Theorem 1]{swipt_2}.
\end{proof}

Then we let ${\bf \Lambda_Q}={\rm diag}({q_1,q_2,\dots,q_D})$, and ${\bf \Lambda_F}={\bf \Sigma}_F^2={\rm diag}({f_1,f_2,\dots,f_D})$. Substituting (\ref {structure1}) and (\ref {structure2}) into (\ref {maxRobC}) and introducing a set of new variables $d_k=f_k(\alpha_kq_k+{\bf \sigma}_1^2), \forall k$, the optimization problem (\ref{maxRobC}) can be rewritten as
\begin{subequations}\label{maxRobe}
\begin{align}
\max_{0\le \varepsilon\le 1,\{d_k\},\{q_k\}} &~~\tilde{f}(\varepsilon,\{d_k\},\{q_k\}) \label{Re_0} \\
{\rm s.t.} & ~~\sum_{k=1}^{D} q_k \le P,\label{Re_1} \\
&~~\tilde{g}(\varepsilon,\{d_k\},\{q_k\})\ge 0,\label{Re_2}
\end{align}
\end{subequations}
where we have defined
\begin{align}
\tilde{f}(\varepsilon, \{d_k\},\{q_k\}) &\triangleq \frac {1-\varepsilon}{2}\sum_{k=1}^{D}\log_2 \frac {\left(1+\frac{\alpha_k}{\sigma_1^2}q_k\right)\left(1+\frac{\beta_k}{\sigma_2^2}d_k\right)}{1+\frac{\alpha_k}{\sigma_1^2}q_k+\frac{\beta_k}{\sigma_2^2}d_k},\\
\tilde{g}(\varepsilon,\{d_k\},\{q_k\}) &\triangleq \varepsilon \eta (g_1P_0+\sigma_1^2D)-\frac {1-\varepsilon}{2}\sum_{k=1}^{D} d_k.
\end{align}

Note that (\ref{maxRobe}) involves only scalar variables in contrast to matrix variables in (\ref{maxRobC}). But the problem is still non-convex and a closed-form solution is difficult to obtain. In the following, we develop an alternating optimization based iterative algorithm which can be proved to converge at least to a local optimal solution. Since the subproblems are convex, close-form solutions are derived by solving Lagrangian dual problems. To proceed, we let $\boldsymbol{q}=[q_1,q_2,\dots,q_D]^T$, and $\boldsymbol{d}=[d_1, d_2,\dots,d_D]^T$.  

\subsection{Optimization with fixed $\boldsymbol{q}$}
We first fix $\boldsymbol{q}$ and search for the optimal $\boldsymbol{d}$ and $\varepsilon$ with the given $\boldsymbol{q}$. 
Considering the Lagrangian of the problem, we have the following dual problem:
\begin{equation}\label{maxRobea}
\max_{0\le\varepsilon\le 1,\{d_k\ge 0\}\atop \nu_1\ge 0,\{\lambda_k\ge 0\}} {\cal L} \triangleq \tilde{f}(\varepsilon,\{d_k\})+\nu_1 \tilde{g}(\varepsilon,\{d_k\})+\sum_{k=1}^{D} \lambda_k d_k.
\end{equation}
Based on the KKT conditions, we have
\begin{subequations}\label{minRobf}
\begin{align}
\nu_1 \tilde{g}(\varepsilon,\{d_k\})& =0,\label{Rf_o}\\
\lambda_k d_k&=0,\forall k, \label{Rf_1}\\
{\nabla_{d_k}}{\cal L}&=0, \forall k, \label{Rf_2}\\
\nabla_\varepsilon{\cal L}&=0. \label{Rf_3}
\end{align}
\end{subequations}

Comparing with the fixed source covariance matrix case, we notice that $d_k$ here is equivalent to $\sigma_2^2 x_k$. As a result, we have
\begin{equation}\label{expd}
d_k=\frac{\sigma_2^2}{2\beta_k}\left(\sqrt {\left(\frac{q_k}{\sigma_1^2}\alpha_k\right)^2+\frac{4q_k\alpha_k\beta_k \mu_1}{\sigma_1^2\sigma_2^2\ln 2}}-\frac{q_k}{\sigma_1^2}\alpha_k-2\right)^+.
\end{equation}
Meanwhile, $\mu_1=\frac{1}{\nu_1}$ is decided by (\ref {Rf_3}). Substituting (\ref {expd}) into (\ref {Rf_3}), we have
\begin{multline}\label{lmu}
\widetilde{l}(\mu_1)\triangleq-\frac{1}{2}\left[\sum_{k=1}^{D}\log_2 \frac{\left(1+\frac{\alpha_k}{\sigma_1^2}q_k\right)\left(1+\frac{\beta_k}{\sigma_2^2}d_k\right)}{1+\frac{\alpha_k}{\sigma_1^2}q_k+\frac{\beta_k}{\sigma_2^2}d_k}\right]\\
+\frac{1}{\mu_1}\left[\eta (g_1P_0+\sigma_1^2D)+\frac {1}{2} \sum_{k=1}^{D} d_k\right]=0.
\end{multline}
Using \eqref{lmu}, we can obtain the optimal $\mu_1$ by a bisection search and then use it calculate $\boldsymbol{d}$. Then according to (\ref {Rf_o}), we have
\begin{align}\label{tsratio}
\varepsilon^* =\frac {\sum_{k=1}^{D} {d_k}}{2\eta (g_1P_0+\sigma_1^2D)+\sum_{k=1}^{D} {d_k}}.
\end{align}

\subsection{Optimization with fixed $\boldsymbol{d}$ and $\varepsilon$}
Since the transmit power constraint at relay (\ref {Re_2}) is independent on ${\bf q}$, and thus can be satisfied with the obtained $\boldsymbol{d}$ and $\varepsilon$ when fixing $\boldsymbol{q}$, here we do not need to consider it any longer. 
Referring to the Lagrangian of (\ref {maxRobe}), we have
\begin{multline}\label{maxRobg}
\max_{\{q_k\ge 0\},\atop \nu_2\ge 0, \{\lambda_k\ge 0\}} {\cal L}\triangleq \tilde{f}(\{q_k\})+\nu_2 \left(P-\sum_{k=1}^{D} q_k\right)
+\sum_{k=1}^{D} {\lambda_k q_k}.
\end{multline}
Deriving the relevant KKT conditions again, we obtain
\begin{subequations}\label{minRobh}
\begin{align}
\nu_2 \left(P-\sum_{k=1}^{D} q_k\right) &=0,\label{Rhh_0}\\
\lambda_k q_k&=0, \forall k,\label{Rh_2}\\
{\nabla_{q_k}}{\cal L}&=0,\forall k.\label{Rhh_3}
\end{align}
\end{subequations}
Then according to (\ref {Rhh_3}), we have
\begin{multline}
\frac{1-\varepsilon}{2}\frac{\alpha_k}{\ln 2 \sigma_1^2}\left(\frac{1}{1+\frac{\alpha_k}{\sigma_1^2}q_k}-\frac{1}{1+\frac{\alpha_k}{\sigma_1^2}q_k+\frac{\beta_k}{\sigma_2^2}d_k}\right)\\
-\nu_2+\lambda_k=0. \label{eq_qk}
\end{multline}
Thus from \eqref{eq_qk}, we have
\begin{equation}\label{expq}
q_k=\frac{\sigma_1^2}{2\alpha_k}\left[\sqrt {\left(\frac{\beta_k}{\sigma_2^2}d_k\right)^2+\frac{2(1-\varepsilon)\beta_k d_k \alpha_k \mu_2}{\sigma_1^2 \sigma_2^2\ln 2}}-\frac{\beta_k}{\sigma_2^2}d_k-2\right]^+,
\end{equation}
where $\mu_2=\frac{1}{\nu_2}$. Substituting (\ref {expq}) into (\ref {Rhh_0}), the optimal $v_2$ can easily be derived through root finding methods.

\subsection{Iterative Optimization}
The iteration to solve (\ref{maxRobC}) is given in Algorithm~1.

\begin{algorithm} 
    \caption{Iteration Framework for TS Relaying}
  \begin{algorithmic}[1]
	  \STATE \textbf{Initialization} Let $\boldsymbol{q}$ satisfying (\ref {Re_1})
		\STATE  Calculate optimal $\boldsymbol{d}$ and $\varepsilon$ with fixed $\boldsymbol{q}$ using (\ref {expd}) and (\ref {tsratio})
		\STATE  Re-optimize $\boldsymbol{q}$ with the obtained $\boldsymbol{d}$ and $\varepsilon$ using (\ref {expq})
    \STATE  Return to Step 2 until convergence
  \end{algorithmic}
\end{algorithm}
\subsection{Convergence Analysis}
Since both the subproblems are convex, the obtained closed form solution to each subproblem is optimal. Thus the objective function (\ref {Re_0}) will increase (or at least maintain) after each iteration. In addition, the constraints in the subproblems are always satisfied with equality in every conditional update. Therefore, the proposed iterative algorithm ensure a monotonic convergence towards (at least) a locally optimal solution.

\section{Simulation Results}
This section investigates the performance of the proposed schemes for the MIMO relay system. The results of the naive amplify-and-forward (NAF) algorithm are also provided for comparison. In the NAF algorithm, we use the $\varepsilon$ derived with the uniform source precoding scheme and let ${\bf Q} = \frac{P}{D}{\bf I}$ and ${\bf F} = \sqrt{\chi}{\bf I}$ where $\chi$ is the scalar that makes the constraint (\ref {tpc}) satisfied. Here we assume that $N=M=L$. Both of ${\bf H_1}$ and ${\bf H_2}$ are modeled as Rician fading channels with a series of independent zero-mean complex Gaussian random variables with variance of $-10 {\rm dB}$.

Fig.~\ref{simu1} plots the achievable rate of the proposed TS relaying schemes against various $P_0$ with $P=1$. The considered values of $P_0$ range from 0 dBm to 50 dBm. The numbers of antennas are all set to be $4$ and $6$, respectively. As is expected, the NAF algorithm falls far behind the proposed schemes. The joint source, relay and TS ratio optimization outperforms the relay and TS ratio only optimization with their gap increasing slightly as the numbers of antennas increase. In all cases, the achievable rate increases as either $P_0$ or the numbers of antennas increase. Fig.~\ref{simu2} then presents the rate results for different numbers of antennas with $P_0=P=1$. It is clear that the achievable rates of the proposed schemes are much better than that of the NAF algorithm and the joint source, relay and TS ratio optimization shows performance gain over the uniform source precoding scheme. Notably, if numbers of the antennas increase, the performance gaps among the three schemes also increase, which agrees with the results in Fig.~\ref{simu1}.

\begin{figure}
\centering
\includegraphics*[width=5.5cm]{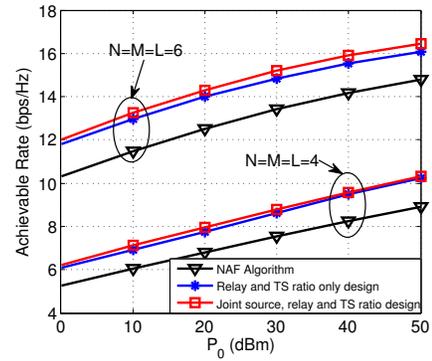}
\caption{Rate results against different $P_0$.}
\label{simu1}
\end{figure}


\begin{figure}
\centering
\includegraphics*[width=5.5cm]{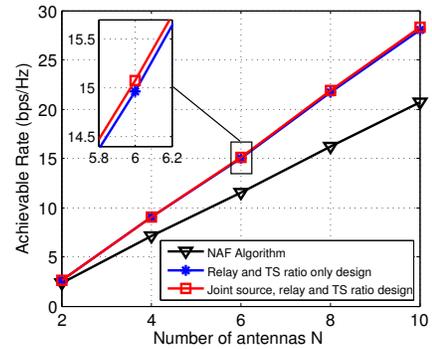}
\caption{Rate results against the number of antennas $N$.}
\label{simu2}
\end{figure}

\section{Conclusion}
This letter studied the rate maximization of a MIMO relay network with an energy harvesting relay node. We started with the fixed source covariance matrix scenario assuming uniform source precoding and then considered joint optimization with the source covariance. Closed-form solution as well as iterative scheme were proposed, respectively, for the two cases. 

\bibliographystyle{IEEEtran}\footnotesize


\end{document}